\newtheorem{thm}{Theorem}
\newtheorem*{defi}{Definition}
\newtheorem*{cor}{Corollary}
\newtheorem*{prop}{Proposition}
\newtheorem{lemma}{Lemma}
\theoremstyle{definition}
\newtheorem*{ex}{Example}
\newtheorem*{remark}{Remark}
\def\qed{\nobreak\hfill $\square$}
\def\<{\langle}
\def\>{\rangle}
\def\eps{\varepsilon}
\def\iB{{\cal B}}
\def\bM{{\bf M}}
\def\D{{\mathbf{D}}}
\def\bM{{\mathbf{M}}}
\def\M3{M_3(\bbbc)}
\def\M3r{M_3(\bbbr)}
\def\bbbr{{\mathbb R}}
\def\bbbc{{\mathbb C}}
\newcommand{\R}{\mathbb{R}}
\newcommand{\N}{\mathbb{N}}
\newcommand{\C}{\mathbb{C}}
\newcommand*{\be}{\begin{equation}}
\newcommand*{\ee}{\end{equation}}
\newcommand*{\bit}{\begin{itemize}}
\newcommand*{\eit}{\end{itemize}}
\newcommand*{\ben}{\begin{enumerate}}
\newcommand*{\een}{\end{enumerate}}
\newcommand{\tr}{\mathrm{Tr}}
\newcommand{\mc}[1]{\mathcal{#1}}
\newcommand{\Hh}{\mathcal{H}}
\newcommand{\abs}[1]{\left|#1\right|}
\newcommand*{\inner}[2]{\left<#1,\,#2\right>}
\newcommand*{\braket}[2]{\left<#1\right.\left|\,#2\right>}
\newcommand*{\bra}[1]{\left<#1\right|}
\newcommand*{\ket}[1]{\left|#1\right>}
\newcommand*{\ler}[1]{\left(#1\right)}
\newcommand{\ba}{\begin{array}}
\newcommand{\ea}{\end{array}}
\newcommand*{\diff}[1]{\mathrm{d} #1}
\newcommand*{\dperd}[1]{\frac{\diff{}}{\diff{#1}}}
\newcommand*{\ddperd}[1]{\frac{\diff{}^2}{\diff{#1}}}
\newcommand*{\kiert}[2]{\left.#1 \right|_{#2}}
\begin{document}
%\rightline{\today }
\bigskip
\centerline{\LARGE {\bf On the joint convexity}}
\bigskip
\centerline{\LARGE {\bf of the Bregman divergence of matrices}}
\bigskip
\centerline{{\bf J\'ozsef Pitrik\footnote{E-mail: pitrik@math.bme.hu } and D\'aniel 
Virosztek\footnote{E-mail: virosz@math.bme.hu}}}

\begin{center}
Department of Mathematical Analysis, \\ Budapest University of Technology and Economics, \\ 
Egry J\'ozsef u.~1., Budapest, 1111 Hungary
\end{center}

\bigskip

\begin{abstract}
We characterize the functions for which the corresponding Bregman divergence is jointly convex on matrices. As an application of this characterization, we derive a sharp inequality for the quantum Tsallis entropy of a tripartite state, which can be considered as a generalization of the strong subadditivity of the von Neumann entropy. (In general, the strong subadditivity of the Tsallis entropy fails for quantum states, but it holds for classical states.) Furthermore, we show that the joint convexity of the Bregman divergence does not imply the monotonicity under stochastic maps, but every monotone Bregman divergence is jointly convex.\\
\medskip
{\bf Keywords:} joint convexity, Bregman divergence, Tsallis entropy, monotonicity. \\
{\it Mathematics subject classification (2010): 46N50, 46L30, 81Q10}
\end{abstract}

%%%%%%%%%%%%%%%%%%%%%%%%%%%%%%%%%%%%%%%%%%%%%%%%%%%%%%%%%%%%%%%%%%%%%%%%%%%%%%%%%%%%%%%%%%%%%%%%%%%%%%%%%%%%%%%%%%%%%%%%%%%%%%%%%%%
%%%%%%%%%%%%%%%%%%%%%%%%%%%%%%%%%%%%%%%%%%%%%%%%%%%%%%%%%%%%%%%%%%%%%%%%%%%%%%%%%%%%%%%%%%%%%%%%%%%%%%%%%%%%%%%%%%%%%%%%%%%%%%%%%%%
\section{Introduction}

In applications that involve measuring the dissimilarity between two objects (numbers, vectors, matrices, 
functions and so on) the definition of a divergence becomes essential. One such measure is a distance 
function, but there are many important measures which do not satisfy the properties of distance. For instance, 
the square loss function has been used widely for regression analysis, Kullback-Leibler divergence \cite{KL} has been applied to compare two probability density functions, the Itakura-Saito divergence \cite{IS} is used as a measure of the perceptual 
difference between spectra, or the Mahalonobis distance \cite{Mah} is to measure the dissimilarity between two 
random vectors of the same distribution. The Bregman divergence was introduced by 
Lev Bregman \cite{Breg} for convex functions $\phi :\R^d\to\R$ with gradient $\nabla\phi$, as the $\phi$-depending 
nonnegative measure of discrepancy
\be \label{Breg_origin}
D_\phi(p,q)=\phi(p)-\phi(q)-\langle\nabla\phi(q),p-q\rangle
\ee
of $d$-dimensional vectors $p,q\in\R^d$. Originally his motivation was the problem of convex programming, 
but it became widely researched both from theoretical and practical viewpoints. For example the remarkable fact that all the aforementioned divergences are special cases of the Bregman 
divergence shows its importance \cite{ba05}. In some literature it is applied under the name Bregman distance, in spite of that it is not in general the usual metric distance. Indeed, 
$D_\phi$ is reflexive but does not satisfy the triangle inequality nor symmetry. 
In addition to the wide range of applications in information theory, statistics and computer science, 
D\'enes Petz suggested the extension of the concept of Bregman divergence to operators \cite{pd07}. If $C$ 
denotes a convex set in a Banach space and $\iB(\Hh)$ denotes the bounded linear operators on the Hilbert space 
$\Hh$, for an operator valued smooth function $\Psi : C\to\iB(\Hh)$ the Bregman operator divergence is 
defined by 
\be \label{opdiv}
D_\Psi(x,y)=\Psi(x)-\Psi(y)-\lim_{t\to +0}\frac{\Psi(y+t(x-y))-\Psi(y)}{t}
\ee
for all $x,y\in C$. Since the Bregman operator divergence can be written as 
$$
D_\Psi(x,y)=\lim_{t\to +0}\frac{t\Psi(x)+(1-t)\Psi(y)-\Psi(tx+(1-t)y)}{t}
$$
for operator convex $\Psi$ functions $D_\Psi(x,y)\ge 0$ remains true for the standard partial ordering 
between self-adjoint operators. In this paper we investigate some interesting and important properties of 
the trace of Bregman operator divergence, and for our convenience we restrict ourself to matrices.
\par
Particularly we give a necessary and sufficient condition for the joint convexity of $\tr D_\Psi(x,y)$ and we investigate the relations between joint convexity and different notions of monotonicity. These properties are widely investigated and have several applications. For example, Tropp used the joint convexity of the quantum relative entropy - which is a special Bregman divergence - to give a succinct proof of a famous concavity theorem of Lieb \cite{tr12}. In \cite{ls14}, Lewin and Sabin characterized a certain monotonicity property of the Bregman divergence by the operator monotonicity of the derivative of the corresponding scalar function. In \cite{BaBo}, Bauschke and Borwein gave a necessary and sufficient condition for the joint convexity of the Bregman divergences on $\R^d.$ However, the question about the joint convexity of the trace of Bregman operator divergence has been left open.
\par
Throughout this paper the following notations will be used. $\R^+$ ($\R^{++}$) consists of all 
nonnegative (positive) numbers and $\bM_n$ ($\bM_n^{sa}, \bM_n^{+}, \bM_n^{++})$ denotes the set of $n \times n$ complex (self-adjoint, positive semidefinite, positive definite) matrices. Similarly, $\mc{B}(\mc{H})$ ($\mc{B}^{sa}(\mc{H}),$ $\mc{B}^{+}(\mc{H}),$ $\mc{B}^{++}(\mc{H})$) is the set of bounded (self-adjoint, positive semidefinite, positive definite) linear operators on the Hilbert space $\mc{H}.$ $\bM_n$ is endowed with the Hilbert-Schmidt inner product $\inner{X}{Y}=\tr X^{*} Y.$ If $f$ is an $\R \supset I \rightarrow \R$ function then the corresponding \emph{standard matrix function} is the following map:
$$
f: \{A \in \bM_n: \ \sigma(A) \subset I \} \rightarrow \bM_n, \, A=\sum_j \lambda_j P_j \mapsto f(A):=\sum_j f(\lambda_j) P_j,
$$
where $\sigma(A)$ is the spectrum and $\sum_j \lambda_j P_j$ is the spectral decomposition of $A.$

%%%%%%%%%%%%%%%%%%%%%%%%%%%%%%%%%%%%%%%%%%%%%%%%%%%%%%%%%%%%%%%%%%%%%%%%%%%%%%%%%%%%%%%%%%%%%%%%%%%%%%%%%%%%%%%%%%%%%%%%%%%%%%%%
\subsection{Definition and basic properties}
Let $f: (0, \infty) \rightarrow \R$ be a convex function. Then the induced map
$$
\varphi_f: \ \bM_n^{++} \rightarrow \R, \ \ X \mapsto \varphi_f(X):= \tr f(X)
$$
is convex, as well \cite{Carlen}. A differentiable convex function is underestimated by its first-order Taylor polynomial, no matter what the base point is. Therefore, the expression
$$
\varphi_f(X)-\varphi_f(Y)- \D \varphi_f[Y](X-Y),
$$
where $\D \varphi_f[Y]$ denotes the Fr\'echet derivative of $\varphi_f$ at the point $Y,$
is nonnegative for any $X, Y \in \bM_n^{++}.$
By the linearity of the trace, for any $Y \in \bM_n^{++},$ $\D \varphi_f[Y]= \tr \circ \D f[Y],$ where $\D f[Y]$ denotes the Fr\'echet derivative of the standard matrix function $f: \bM_n^{++} \rightarrow \bM_n^{sa}$ at $Y.$ Let us define the central object of this paper precisely.

\begin{defi}
Let $f \in C^1 ((0, \infty))$ be a convex function and $X, Y \in \bM_n^{++}.$
The Bregman $f$-divergence of $X$ and $Y$ is defined by
\be \label{bregdef}
H_f (X,Y)=\tr \ler{f(X)-f(Y)-\D f[Y](X-Y)}.
\ee
\end{defi}
Note that this definition of the Bregman $f$-divergence coincides with the trace of the Bregman operator divergence (\ref{opdiv}), if $\Psi$ is the standard matrix function $f$ and $C=\bM_n^{++}, \, \mc{H}=\C^n.$ 

Consider the spectral decomposition $A=\sum_j \lambda_j \ket{\varphi_j}\bra{\varphi_j}$ 
of the positive definite matrix $A$ and denote the corresponding matrix units by $E_{ij}:=\ket{\varphi_i}\bra{\varphi_j}.$ The Fr\'echet derivative of the standard matrix function $f: \bM_n^{++} \rightarrow \bM_n^{sa}, \ X \mapsto f(X)$ at the point $A \in \bM_n^{++}$ is
\be \label{freform}
\D f [A]=\sum_{i,j} \int_{0}^{1} f'\ler{\lambda_j +t(\lambda_i-\lambda_j)} \mathrm{d}t \ket{E_{ij}}\bra{E_{ij}},
\ee
where Hermite's formula is used for the divided difference matrix \cite[Thm. 3.33]{HP}. Remark that the Fr\'echet derivative $\D f [A]$ is an $\bM_n^{sa} \rightarrow \bM_n^{sa}$ map, so the formula (\ref{freform}) holds in that sense that the left hand side of (\ref{freform}) is equal to the right hand side of (\ref{freform}) restricted to $\bM_n^{sa}.$ \par

If $f$ is differentiable at $A$, then the identities
$$ %\label{id1}
\D f[A](B)=\kiert{\dperd{t}f(A+tB)}{t=0}
$$
and
$$ %\label{id2}
\tr \ler{\kiert{\dperd{t}f(A+tB)}{t=0}}=\kiert{\dperd{t} \tr f(A+tB)}{t=0}=\tr f'(A) B
$$
hold and - in particular - show that
\be \label{bregdefvar}
H_f (X,Y)=\tr \ler{f(X)-f(Y)-f'(Y)(X-Y)}. 
\ee

\begin{lemma} \label{intrepr}
If $f \in C^2 ((0, \infty)),$ the Bregman divergence admits the integral representation
\be \label{intrep_1}
H_f (X,Y)=\int_{s=0}^{1} (1-s) \tr (X-Y) \D f'[Y+s(X-Y)](X-Y) \diff{s}.
\ee 
\end{lemma}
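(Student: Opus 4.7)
The plan is to obtain the representation as a second-order Taylor expansion with integral remainder, applied to the scalar function
\[
g(s) := \tr f\bigl(Y + s(X-Y)\bigr), \qquad s \in [0,1].
\]
Since $Y \in \bM_n^{++}$ and $\bM_n^{++}$ is open, for $X$ close enough to $Y$ the segment $Y + s(X-Y)$ stays in $\bM_n^{++}$; the general case follows because the whole segment lies in $\bM_n^{++}$ by convexity whenever $X,Y \in \bM_n^{++}$. The hypothesis $f \in C^2((0,\infty))$ will give $g \in C^2([0,1])$, which is what allows me to write $g(1)=g(0)+g'(0)+\int_0^1(1-s)g''(s)\,\diff s$.

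Next I compute the two derivatives. Using the identity
\[
\tr\!\left( \kiert{\dperd{t} f(A+tB)}{t=0}\right) = \tr f'(A)B
\]
recorded in the excerpt, with $A = Y+s(X-Y)$ and $B=X-Y$, I get $g'(s) = \tr f'\bigl(Y+s(X-Y)\bigr)(X-Y)$. Differentiating once more in $s$ and exchanging $\dperd{s}$ with $\tr$ and with the left-multiplication by $(X-Y)$, the derivative in $s$ of $f'(Y+s(X-Y))$ is precisely $\D f'[Y+s(X-Y)](X-Y)$ by definition of the Fr\'echet derivative along the straight-line direction $X-Y$. Hence
\[
g''(s) = \tr (X-Y)\,\D f'[Y+s(X-Y)](X-Y).
\]

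Finally I apply Taylor's formula $g(1)=g(0)+g'(0)+\int_0^1(1-s)g''(s)\,\diff s$. The left-hand side equals $\tr f(X)$, and the first two terms on the right equal $\tr f(Y) + \tr f'(Y)(X-Y)$. Rearranging and recognizing the left-hand side as $H_f(X,Y)$ via the formula (\ref{bregdefvar}) gives exactly (\ref{intrep_1}).

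The only mildly nontrivial step is the identification $\dperd{s}\, f'(Y+s(X-Y)) = \D f'[Y+s(X-Y)](X-Y)$; this is just the chain rule for the Fr\'echet derivative of the standard matrix function $f'$ applied to an affine path, and is legal because $f'$ is $C^1$ on $(0,\infty)$, ensuring $f'$ is Fr\'echet differentiable on $\bM_n^{++}$ (for example via the divided-difference/Hermite representation (\ref{freform})). Everything else is Taylor expansion and the trace identities already recorded above the lemma.
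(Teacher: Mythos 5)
Your proof is correct and is essentially the paper's own argument: the paper derives the same identity by applying the fundamental theorem of calculus twice along the segment $Y+s(X-Y)$ and swapping the order of integration, which is precisely the integral-remainder form of the second-order Taylor expansion of $g(s)=\tr f(Y+s(X-Y))$ that you invoke directly. The auxiliary facts you use (differentiability of the path, the trace identity for $g'$, the chain rule giving $g''$) match those used implicitly in the paper, so there is nothing further to add.
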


\begin{proof}
Remark that
$$
\tr f(X)-\tr f(Y)=\int_{t=0}^{1} \dperd{t} \tr f(Y+t(X-Y)) \diff{t}=\int_{t=0}^{1} \tr f'(Y+t(X-Y))(X-Y) \diff{t}
$$
and
$$
f'(Y+t(X-Y))-f'(Y)= \int_{s=0}^{t} \dperd{s} f'(Y+s(X-Y)) \diff{s}= \int_{s=0}^{t} \D f'[Y+s(X-Y)](X-Y) \diff{s},
$$
hence
$$
H_f (X,Y)= \int_{t=0}^{1} \tr \ler{\ler{\int_{s=0}^{t} \D f'[Y+s(X-Y)](X-Y) \diff{s}} \ler{X-Y}} \diff{t}
$$
$$
=\int_{t=0}^{1} \int_{s=0}^{t}\tr (X-Y) \D f'[Y+s(X-Y)](X-Y) \diff{s}  \diff{t}
$$
$$
=\int_{s=0}^{1} (1-s) \tr (X-Y) \D f'[Y+s(X-Y)](X-Y) \diff{s}.
$$
\qed
\end{proof}

%%%%%%%%%%%%%%%%%%%%%%%%%%%%%%%%%%%%%%%%%%%%%%%%%%%%%%%%%%%%%%%%%%%%%%%%%%%%%%%%%%%%%%%%%%%%%%%%%%%%%%%%%%%%%%%%%%%%%%%%%%%%%%%%%%%
%%%%%%%%%%%%%%%%%%%%%%%%%%%%%%%%%%%%%%%%%%%%%%%%%%%%%%%%%%%%%%%%%%%%%%%%%%%%%%%%%%%%%%%%%%%%%%%%%%%%%%%%%%%%%%%%%%%%%%%%%%%%%%%%%%%

\section{A characterization of the joint convexity}

In this section we investigate the Bregman $f$-divergence from the viewpoint of joint convexity,
which is essential in the further applications. Since $f$ is convex, it is clear that the Bregman 
divergence is convex in the first variable.
For the original Bregman divergence (\ref{Breg_origin}) Bauschke and Borwein show \cite{BaBo} that $D_\phi$ 
is jointly convex - i. e. 
$$
D_\phi(t p_1+(1-t) p_2 ,t q_1+(1-t) q_2) \leq t D_\phi(p_1, q_1) +(1-t) D_{\phi}(p_2,q_2),
$$
where $p_1,p_2,q_1,q_2 \in \R^d, \, t \in [0,1]$ - if and only if the inverse of the Hessian of $\phi$ is concave in L\"owner sense. Particularly,
if $\phi$ is an $\R \supset I \rightarrow \R$ convex function, then $D_\phi$ is jointly convex if and only if $1/\phi''$ is concave. 
From this viewpoint the next characterization is rather interesting.
\begin{thm} \label{fo}
Let $f \in  C^2((0, \infty))$ be a convex function  with $f''>0$ on $\R^{++}.$ Then the following conditions are equivalent.
\ben[(A)]

\item \label{opkonk}
The map
\be \label{conc}
\bM_n^{++} \rightarrow \mc{B}\ler{\bM_n^{sa}}; \quad X \mapsto \ler{\D f'[X]}^{-1}
\ee
is operator concave.

\item \label{bregkonv} The Bregman $f$-divergence
$$
H_f: \bM_n^{++} \times \bM_n^{++} \rightarrow \R^+; \quad (X,Y) \mapsto H_f(X,Y)
$$
is jointly convex.
\een
\end{thm}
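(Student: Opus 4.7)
I would reduce both implications to the equivalent intermediate statement that the quadratic form $Q(H,A) := \tr H \D f'[A] H$ is jointly convex on $\bM_n^{sa} \times \bM_n^{++},$ then transport this to $H_f$ via the integral representation of Lemma~\ref{intrepr}. First I note that the divided-difference formula~(\ref{freform}), applied to $f',$ exhibits $\D f'[A]$ as a positive definite self-adjoint operator on the real Hilbert space $\bM_n^{sa}$ (since $f'' > 0$ makes every divided difference of $f'$ strictly positive), so $\ler{\D f'[A]}^{-1}$ is well defined. Completing the square then yields the Legendre--Fenchel duality
$$
Q(H,A) = \sup_{K \in \bM_n^{sa}} \ler{2 \tr K H - \tr K \ler{\D f'[A]}^{-1} K},
$$
together with its dual form
$$
-\tr K \ler{\D f'[A]}^{-1} K = \inf_{H \in \bM_n^{sa}} \ler{Q(H,A) - 2 \tr K H}.
$$

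For the direction (\ref{opkonk})$\Rightarrow$(\ref{bregkonv}), operator concavity of $A \mapsto \ler{\D f'[A]}^{-1}$ is equivalent to scalar concavity of $A \mapsto \tr K \ler{\D f'[A]}^{-1} K$ for every $K \in \bM_n^{sa}.$ Thus for each fixed $K$ the integrand of the first variational formula is affine in $H$ and convex in $A,$ and the supremum over $K$ preserves joint convexity, whence $Q$ is jointly convex on $\bM_n^{sa} \times \bM_n^{++}.$ Lemma~\ref{intrepr} writes
$$
H_f(X,Y) = \int_{0}^{1} (1-s)\, Q \ler{X-Y,\, Y + s(X-Y)}\, \diff{s},
$$
and since $(X,Y) \mapsto \ler{X-Y,\, Y+s(X-Y)}$ is a linear map from $\bM_n^{++} \times \bM_n^{++}$ into $\bM_n^{sa} \times \bM_n^{++},$ the integrand is convex in $(X,Y)$ for each $s \in [0,1].$ Integrating in $s$ gives joint convexity of $H_f.$

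For the converse (\ref{bregkonv})$\Rightarrow$(\ref{opkonk}) I would use a second-order perturbation. For $Y \in \bM_n^{++}$ and $H \in \bM_n^{sa},$ the integral representation gives
$$
H_f(Y + \varepsilon H,\, Y) = \varepsilon^{2} \int_{0}^{1} (1-s)\, \tr H\, \D f'[Y + s \varepsilon H]\, H\, \diff{s} \;=\; \frac{\varepsilon^{2}}{2}\, Q(H,Y) + o(\varepsilon^{2})
$$
as $\varepsilon \to 0^{+},$ by continuity of $A \mapsto \D f'[A]$ on $\bM_n^{++}.$ Applying the joint convexity hypothesis to $(Y_i + \varepsilon H_i,\, Y_i)$ for $i=1,2,$ dividing by $\varepsilon^{2}$ and letting $\varepsilon \to 0^{+}$ yields joint convexity of $Q$ on $\bM_n^{sa} \times \bM_n^{++}.$ Inserting this into the dual variational formula exhibits $A \mapsto -\tr K \ler{\D f'[A]}^{-1} K$ as a partial infimum in $H$ of a jointly convex function, hence convex in $A.$ Since this holds for every $K \in \bM_n^{sa},$ the map $A \mapsto \ler{\D f'[A]}^{-1}$ is operator concave.

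The step I expect to require the most care is the passage between the Löwner-order statement (\ref{opkonk}) on $\mc{B}(\bM_n^{sa})$ and the scalar-valued joint convexity of $Q;$ everything hinges on the duality identities above being genuine equalities, which in turn demands strict positivity of $\D f'[A]$ on all of $\bM_n^{sa}$ (guaranteed by $f'' > 0$). A secondary point to verify is that $Y + s\varepsilon H \in \bM_n^{++}$ uniformly in $s \in [0,1]$ for small enough $\varepsilon,$ so that the perturbation limit and the exchange of limit with the integral are legitimate; this is immediate from openness of $\bM_n^{++}$ and compactness of $[0,1].$
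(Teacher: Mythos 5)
Your proposal is correct, and its skeleton coincides with the paper's: both reduce the theorem to the joint convexity of the quadratic form $Q(H,A)=\tr H\, \D f'[A](H)$ on $\bM_n^{sa}\times\bM_n^{++}$, transport it to $H_f$ through the integral representation of Lemma~\ref{intrepr} composed with the affine map $(X,Y)\mapsto(X-Y,\,Y+s(X-Y))$, and recover $Q$ from $H_f$ by the same second-order perturbation $H_f(Y+\eps H,Y)=\tfrac{\eps^2}{2}Q(H,Y)+o(\eps^2)$. Where you genuinely diverge is in how the equivalence between joint convexity of $Q$ and operator concavity of $X\mapsto(\D f'[X])^{-1}$ is established. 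The paper handles the two directions asymmetrically: for (\ref{opkonk})$\Rightarrow$(\ref{bregkonv}) it combines order reversal under inversion with the cited joint convexity of $(x,T)\mapsto\inner{x}{T^{-1}x}$ (Hansen/Lieb--Ruskai), and for the converse it makes the explicit choice $B_i=T_i^{-1}\circ\ler{\sum_j\alpha_j T_j^{-1}}^{-1}(C)$ and computes directly. You instead use the pair of Legendre--Fenchel identities $\inner{H}{TH}=\sup_K\ler{2\inner{K}{H}-\inner{K}{T^{-1}K}}$ and $-\inner{K}{T^{-1}K}=\inf_H\ler{\inner{H}{TH}-2\inner{K}{H}}$, so that one direction becomes a pointwise supremum of jointly convex functions and the other a partial infimum of a jointly convex function; the paper's $B_i$ is precisely the minimizer in your infimum. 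What your route buys is self-containedness and symmetry --- the Lieb--Ruskai-type convexity is derived rather than imported, and both implications become instances of the same duality --- at the cost of having to justify that the variational formulas are genuine equalities, which, as you correctly note, rests on the strict positivity of $\D f'[A]$ on $\bM_n^{sa}$ guaranteed by $f''>0$ via formula~(\ref{freform}).
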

\begin{remark}
For a convex function $f \in  C^2((0, \infty))$ the property $f''>0$ is equivalent to the existence of $\ler{\D f'[X]}^{-1}$ for every $X \in \bM_n^{++}.$ On the one hand, $f''>0$ ensures that $\D f'[X] \in \mc{B}\ler{\bM_n^{sa}}$ is a positive definite and hence invertible map --- see formula (\ref{freform}). On the other hand, if $f''(\lambda)=0$ for some $\lambda>0,$ then $\D f'[\lambda I]=0 \in \mc{B}\ler{\bM_n^{sa}}.$
\end{remark}

In the recent paper \cite{trch} \emph{Tropp} and \emph{Chen} defined the \emph{Matrix Entropy Class} the following way.

\begin{defi} \label{mec}
The Matrix Entropy Class consists of the $\R^+ \rightarrow \R$ functions that are either affine or satisfy the following conditions.
\begin{itemize}
 \item $f$ is convex and $f \in C([0, \infty)) \cap C^2((0, \infty)).$
 \item For every $n \in \N,$ the map $ \bM_n^{++} \rightarrow \mc{B}\ler{\bM_n^{sa}}; \, \, X \mapsto \ler{\D f'[X]}^{-1}$ is concave with respect to the semidefinite order.
\end{itemize}
\end{defi}

By this definition, the statement of Theorem \ref{fo} is essentially the following: the set of those functions for which the corresponding Bregman divergence is jointly convex coincides with the Matrix Entropy Class defined by Tropp and Chen.

\emph{Proof of Theorem \ref{fo}:}
Let us prove the direction (\ref{opkonk}) $\Rightarrow$ (\ref{bregkonv}) first. Let $X_i$ and $Y_i$ be positive definite $n \times n$ matrices ($i \in \{1, \dots, N\}$) and let $\alpha_i$ be reals such that $\alpha_i \geq 0, \, \sum_i \alpha_i=1.$ Let us use the notations $X=\sum_i \alpha_i X_i, \, Y=\sum_i \alpha_i Y_i.$ By the operator concavity of the map $X \mapsto \ler{\D f'[X]}^{-1},$ for any $0 \leq s \leq 1$ we have
$$
\tr (X-Y) \D f'[Y+s(X-Y)](X-Y)
$$
$$
=\tr \ler{\sum_i \alpha_i (X_i-Y_i)} \ler{\ler{\D f'\left[\sum_i \alpha_i (Y_i+s(X_i-Y_i))\right]}^{-1}}^{-1} \ler{\sum_i \alpha_i (X_i-Y_i)} 
$$
$$
\leq \tr \ler{\sum_i \alpha_i (X_i-Y_i)} \ler{ \sum_i \alpha_i \ler{\D f'\left[ Y_i+s(X_i-Y_i)\right]}^{-1}}^{-1} \ler{\sum_i \alpha_i (X_i-Y_i)}.
$$
We used that taking the inverse of an operator reverses the semidefinite order.
If $\mc{H}$ is a Hilbert space, then the map
$$
\mc{H} \times \mc{B}^{++}(\mc{H}) \rightarrow \R; \quad (x, T) \mapsto \inner{x}{T^{-1}x}
$$
is convex (see \cite[Prop. 4.3]{ha06a}, which may be obtained as a consequence of \cite[Thm. 1]{lr74}). If we apply this property to the Hilbert space $\bM_n^{sa}$ with the Hilbert-Schmidt inner product we get that 
$$
\tr \ler{\sum_i \alpha_i (X_i-Y_i)} \ler{ \sum_i \alpha_i \ler{\D f'\left[Y_i+s(X_i-Y_i)\right]}^{-1}}^{-1} \ler{\sum_i \alpha_i (X_i-Y_i)} 
$$
$$
\leq \sum_i \alpha_i  \tr (X_i-Y_i) \ler{\ler{\D f'\left[(Y_i+s(X_i-Y_i))\right]}^{-1}}^{-1} (X_i-Y_i)
$$
$$
=\sum_i \alpha_i  \tr (X_i-Y_i) \D f'\left[(Y_i+s(X_i-Y_i))\right](X_i-Y_i).
$$
The result of Lemma \ref{intrepr} (eq. (\ref{intrep_1})) clearly shows that the obtained inequality
$$
\tr (X-Y) \D f'[Y+s(X-Y)](X-Y)\leq \sum_i \alpha_i  \tr (X_i-Y_i) \D f'\left[(Y_i+s(X_i-Y_i))\right](X_i-Y_i)
$$
implies the joint convexity of the Bregman divergence.
\par
%%%%%%%%%%%%%%%%%%%%%%%%%%%%%%%%%%%%%%%%%%%%%%%%%%%%%%%%%%%%%%%%%%%%%%%%%%%%%%%%%%%%%%%%%%%%%%%%%%%%%%%%%%%%%%%%%%%%%%%%
The proof of (\ref{bregkonv}) $\Rightarrow$ (\ref{opkonk}) is the following.
The conditon (\ref{bregkonv}) means that if $A_i \in \bM_n^{++}, \, B_i \in \bM_n^{sa}$ and $\alpha_i \geq 0, \, \sum_i \alpha_i=1,$ then
\be \label{aibi_1}
H_f\ler{\sum_i \alpha_i (A_i+ \eps B_i), \sum_i \alpha_i A_i} \leq \sum_i \alpha_i H_f\ler{A_i+ \eps B_i, A_i},
\ee
where $\eps< \eps_0$ for some $\eps_0>0.$ By the integral representation (\ref{intrep_1}), the right hand side of (\ref{aibi_1}) can be written as
$$
\sum_i \alpha_i H_f\ler{A_i+ \eps B_i, A_i}=\sum_i \alpha_i \int_{s=0}^{1} (1-s) \tr \eps B_i \D f'[A_i+s \eps B_i](\eps B_i) \diff{s}
$$

$$
=\eps^2 \int_{s=0}^{1} (1-s) \sum_i \alpha_i \tr B_i \D f'[A_i+s \eps B_i](B_i) \diff{s}.
$$
Similarly, the left hand side is
$$
 H_f\ler{ \sum_i \alpha_i \ler{A_i+ \eps B_i},\sum_i \alpha_i A_i}
$$
$$
=\eps^2 \int_{s=0}^{1} (1-s) \tr \ler{\sum_i \alpha_i B_i} \D f'\left[\sum_i \alpha_i (A_i+s \eps B_i)\right] \ler{\sum_i \alpha_i B_i} \diff{s}.
$$
The assumption $f\in C^2((0, \infty))$ ensures that the map  $\D f': \ \bM_n^{++} \rightarrow \mc{B}\ler{\bM_n^{sa}}$ is continuous. Therefore, $\lim_{\eps \to 0} \D f'[A_i+s \eps B_i]=\D f'[A_i]$ etc.
After division by $\eps^2$ and taking the limit $\eps \to 0$ we obtain from (\ref{aibi_1}) that
$$
\tr \ler{\sum_i \alpha_i B_i} \D f'\left[\sum_i \alpha_i A_i\right] \ler{\sum_i \alpha_i B_i} \leq \sum_i \alpha_i \tr B_i \D f'[A_i](B_i),
$$
that is, the map 
\be \label{jc}
\bM_n^{++} \times \bM_n^{sa} \ni (A,B) \mapsto \tr B \D f'[A](B)
\ee
is jointly convex. This is sufficient to show the opearator concavity of the map $X \mapsto \ler{\D f'[X]}^{-1}$ by the followings. Let
$A_i \in \bM_n^{++}$ ($i \in \{1, \dots, N\}$) and $\alpha_i \geq 0, \, \sum_i \alpha_i=1.$ Let us use the short notation $T_i=\D f'[A_i].$ For any $C \in \bM_n^{sa}$ we can define
$$
B_i:=\ler{\D f'[A_i]}^{-1}\circ\ler{\sum_j \alpha_j \ler{\D f'[A_j]}^{-1}}^{-1}(C) \equiv T_i^{-1}\circ \ler{\sum_j \alpha_j T_j^{-1}}^{-1}(C). 
$$
Observe that by this definition $\sum_i \alpha_i B_i=C.$ On the one hand,
$$
\sum_i \alpha_i \tr B_i \D f'[A_i](B_i)=\sum_i \alpha_i \tr B_i T_i(B_i)
$$
$$
=\sum_i \alpha_i \tr \ler{ T_i^{-1}\circ \ler{\sum_j \alpha_j T_j^{-1}}^{-1}(C) \cdot T_i \circ T_i^{-1} \circ  \ler{\sum_j \alpha_j T_j^{-1}}^{-1}(C)}
$$
$$
= \tr \ler{ \ler{\sum_i \alpha_i T_i^{-1}}\circ \ler{\sum_j \alpha_j T_j^{-1}}^{-1}(C) \cdot \ler{\sum_j \alpha_j T_j^{-1}}^{-1}(C)}
$$
$$
= \tr C \cdot \ler{\sum_i \alpha_i T_i^{-1}}^{-1}(C)= \tr C \ler{\sum_i \alpha_i \ler{\D f'[A_i]}^{-1}}^{-1}(C).
$$
On the other hand,
$$
\tr \ler{\sum_i \alpha_i B_i} \D f'\left[\sum_i \alpha_i A_i\right] \ler{\sum_i \alpha_i B_i}=\tr C \D f'\left[\sum_i \alpha_i A_i\right](C)
$$
By the joint convexity of (\ref{jc}),
$$
\tr C \D f'\left[\sum_i \alpha_i A_i\right](C) \leq \tr C \ler{\sum_i \alpha_i \ler{\D f'[A_i]}^{-1}}^{-1}(C)
$$
holds, and $C$ was an arbitrary element of $\bM_n^{sa},$ hence the operator inequality
$$
\D f'\left[\sum_i \alpha_i A_i\right] \leq \ler{\sum_i \alpha_i \ler{\D f'[A_i]}^{-1}}^{-1}
$$
holds, which is equivalent to
$$
\ler{\D f'\left[\sum_i \alpha_i A_i\right]}^{-1} \geq \sum_i \alpha_i \ler{\D f'[A_i]}^{-1}.
$$
This is the desired concavity property.
\qed

\subsection{An extension of the Bregman divergence to singular matrices}
In quantum information theory, the singular density matrices play a central role, therefore, we would like to extend the Bregman $f$-divergences from $\bM_n^{++} \times \bM_n^{++}$ to $\bM_n^{+} \times \bM_n^{+}$.
It is a natural idea to define the Bregman divergence of the positive semidefinite matrices $X$ and $Y$ as follows:

\be \label{bregeps}
H_f(X,Y):=\lim_{\eps \to 0}H_f\ler{X+\eps I,Y+\eps I}.
\ee
With the formula (\ref{bregdefvar}) in hand, easy computation shows that if $X$ and $Y$ admit the spectral decompositions $X=\sum_{j=1}^{n} \lambda_j \ket{\varphi_j} \bra{\varphi_j}$ and $Y=\sum_{k=1}^{n} \mu_k \ket{\psi_k} \bra{\psi_k},$ then
\be \label{kiir}
H_f(X+\eps I,Y+\eps I)= \sum_{j,k=1}^{n} \abs{\braket{\varphi_j}{\psi_k}}^2 \ler{f(\lambda_j+\eps)-f(\mu_k+\eps) -f'(\mu_k+\eps)(\lambda_j-\mu_k)}.
\ee
Assume that $f \in C^0([0,\infty)) \cap C^1((0,\infty)),$ that is, $\lim_{x \to 0} f(x) \in \R.$
The convexity of $f$ gives that $f'$ is monotone increasing, hence $\lim_{\eps \to 0} f'(\eps) \in \R$ or $\lim_{\eps \to 0} f'(\eps)=-\infty.$

Clearly, if $\lim_{\eps \to 0} f'(\eps) \in \R,$ then the limit of (\ref{kiir}) is a real number. If $\mathrm{ker} (Y) \subseteq \mathrm{ker} (X),$ then $\lambda_j=0$ whenever $\mu_k=0$ and $\braket{\varphi_j}{\psi_k} \neq 0,$ hence the limit is finite in this case, as well. If $\mathrm{ker} (Y) \nsubseteq \mathrm{ker} (X)$ and $\lim_{\eps \to 0} f'(\eps)=-\infty,$ then the limit is $+\infty.$
\par
So we conclude that if $f$ is continuous at $0,$ then (\ref{bregeps}) is well-defined and takes values in $\R^+ \cup \{+ \infty \},$ that is, the Bregman $f$-divergences can be extended to $\bM_n^{+} \times \bM_n^{+}$ by continuity. \par

If $H_f(\cdot, \cdot)$ defined by a convex function $f \in  C^2((0, \infty))$ is jointly convex on $\bM_n^{++} \times \bM_n^{++},$ then (assuming in addition that $f \in  C^0([(0, \infty))$) $H_f(\cdot, \cdot)$ is jointly convex on $\bM_n^{+} \times \bM_n^{+}.$ Indeed, for $X_k, Y_k \in \bM_n^{+}, c_k\geq 0, \sum_k c_k=1$ we have
$$
H_f \ler{\sum_k c_k X_k, \sum_k c_k Y_k}= \lim_{\eps \to 0} H_f \ler{\sum_k c_k X_k+\eps I, \sum_k c_k Y_k+\eps I}
$$
$$
=\lim_{\eps \to 0} H_f \ler{\sum_k c_k (X_k+\eps I), \sum_k c_k (Y_k+\eps I)} \leq
\lim_{\eps \to 0} \sum_k c_k H_f \ler{X_k+\eps I, Y_k+\eps I}
$$
$$
=\sum_k c_k \lim_{\eps \to 0} H_f \ler{X_k+\eps I, Y_k+\eps I}=\sum_k c_k H_f \ler{X_k, Y_k}.
$$
Therefore, we can reformulate the main condition with a bit different conditions.
\begin{thm} \label{fo2}
Let $f \in  C^0([0,\infty)) \cap C^2((0, \infty))$ be a convex function with $f''>0$ on $\R^{++}.$ Then the following conditions are equivalent.
\ben[(i)]
\item \label{opkonk2}
The map
$$
\bM_n^{++} \rightarrow \mc{B}\ler{\bM_n^{sa}}; \quad X \mapsto \ler{\D f'[X]}^{-1}
$$
is operator concave.

\item \label{bregkonv2} The Bregman $f$-divergence
$$
H_f: \bM_n^{+} \times \bM_n^{+} \rightarrow \R^+ \cup \{+ \infty \}; \quad (X,Y) \mapsto H_f(X,Y)
$$
is jointly convex.
\een
\end{thm}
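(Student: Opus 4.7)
The plan is to reduce Theorem \ref{fo2} to Theorem \ref{fo} by the $\varepsilon$-regularization that is already developed just before the statement. For the direction (\ref{opkonk2}) $\Rightarrow$ (\ref{bregkonv2}), I would repeat the chain of equalities and one inequality displayed in the paragraph preceding the theorem: given $X_k, Y_k \in \bM_n^+$ and weights $c_k \geq 0$ with $\sum_k c_k = 1$, write
\[
H_f\!\left(\sum_k c_k X_k, \sum_k c_k Y_k\right) = \lim_{\varepsilon \to 0} H_f\!\left(\sum_k c_k (X_k + \varepsilon I), \sum_k c_k (Y_k + \varepsilon I)\right),
\]
apply the joint convexity on $\bM_n^{++} \times \bM_n^{++}$ (which is Theorem \ref{fo} under hypothesis (\ref{opkonk2}), and is applicable since $X_k + \varepsilon I,\, Y_k + \varepsilon I \in \bM_n^{++}$), and then interchange the finite sum with the limit in $\varepsilon$. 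Both the outer limit and the limit-sum exchange are justified by the continuity-at-the-boundary analysis performed around formula (\ref{kiir}): the hypothesis $f \in C^0([0,\infty)) \cap C^2((0,\infty))$ guarantees that $\lim_{\varepsilon \to 0} H_f(X + \varepsilon I, Y + \varepsilon I)$ exists in $\R^+ \cup \{+\infty\}$ for every $X, Y \in \bM_n^+$ and equals the value defined by (\ref{bregeps}).

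The reverse implication (\ref{bregkonv2}) $\Rightarrow$ (\ref{opkonk2}) is immediate by restriction. Joint convexity on the larger domain $\bM_n^+ \times \bM_n^+$ trivially entails joint convexity on the open subset $\bM_n^{++} \times \bM_n^{++}$, where $H_f$ is everywhere finite (and the extended definition agrees with the original one from (\ref{bregdef})). Invoking Theorem \ref{fo} on this subdomain yields the desired operator concavity of $X \mapsto (\D f'[X])^{-1}$.

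The substantive analytical content is entirely absorbed in Theorem \ref{fo}; what Theorem \ref{fo2} adds is a boundary-regularity statement, so I do not expect any real obstacle. The only delicate point is the presence of the value $+\infty$, which occurs precisely when $\ker(Y) \not\subseteq \ker(X)$ and $\lim_{x \to 0^+} f'(x) = -\infty$ (this is why the codomain in (\ref{bregkonv2}) is enlarged to $\R^+ \cup \{+\infty\}$ compared to (\ref{bregkonv})). One must check that the convexity inequality $H_f(\sum_k c_k X_k, \sum_k c_k Y_k) \leq \sum_k c_k H_f(X_k, Y_k)$ remains meaningful and correct in the extended real setting; it does, since the right-hand side is $+\infty$ whenever any term is, and term-by-term passage to the limit in $\varepsilon$ is compatible with this convention because each summand $H_f(X_k + \varepsilon I, Y_k + \varepsilon I)$ converges in $\R^+ \cup \{+\infty\}$ by the spectral formula (\ref{kiir}).
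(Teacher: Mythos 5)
Your proposal is correct and follows essentially the same route as the paper: the paper establishes Theorem \ref{fo2} precisely by the displayed chain of equalities and the single convexity inequality in the paragraph preceding the statement (using the $\varepsilon$-regularization (\ref{bregeps}) and the finiteness analysis around (\ref{kiir})), together with the trivial restriction argument for the converse. Your extra remarks about the $+\infty$ values and the limit--sum exchange for nonnegative terms are sound and only make explicit what the paper leaves implicit.
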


\subsection{A different condition and alternative proofs}
In a recent preprint Hansen and Zhang investigated the connections between the condition (\ref{opkonk}) in Theorem \ref{fo} and the property that $f''$ is operator convex and numerically non-increasing. In an earlier version ot their paper, these conditions were claimed to be equivalent \cite[Thm 1.2]{hazh}. Later the proof turned out to be incomplete. In the current version it is proved that if $f''$ is operator convex and numerically non-increasing, then the condition (\ref{opkonk}) in Theorem \ref{fo} is satisfied \cite[Thm 1.3]{hazh_v}.
\par
Now we give a direct proof of the fact that the operator convexity (and the non-increasing property) of $f''$ is sufficient to deduce the joint convexity of the Bregman $f$-divergence.

%%%%%%%%%%%%%%%%%%%%%%%%%%%%%%%%%%%%%%%%%%%%%%%%%%%%%%%%%%%%%%%%%%%%%%%%%%%%%%%%%%%%%%%%%%
%%%%%%%%%%%%%%%%%%%%%%%%%%%%%%%%%%%%%%%%%%%%%%%%%%%%%%%%%%%%%%%%%%%%%%%%%%%%%%%%%%%%%%%%%%%%%%%%%%%%%%%

\begin{lemma} \label{diffrepr}
Set $f \in C^1 ((0, \infty))$ and $A \in \bM_n^{++}.$ Then the Fr\'echet derivative of the standard matrix function $f$ is
$$
\D f [A]=
\int_{0}^{1} f'\ler{t L_A + (1-t)R_A} \mathrm{d}t,
$$
where $L_A$ ($R_A$) denotes the left (right) multiplication by $A:$
$$
L_A: \bM_n \rightarrow \bM_n, \ X \mapsto L_A(X):=AX, \qquad R_A: \bM_n \rightarrow \bM_n, \ X \mapsto R_A(X):=XA.
$$
\end{lemma}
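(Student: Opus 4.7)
The plan is to reduce the claimed identity to the already-established formula (\ref{freform}) by diagonalising the commuting superoperators $L_A$ and $R_A$ on the basis of matrix units. First, I would observe that $L_A R_A(X)=A X A = R_A L_A(X),$ so $L_A$ and $R_A$ commute as elements of $\mc{B}(\bM_n).$ Fix the spectral decomposition $A=\sum_j \lambda_j \ket{\varphi_j}\bra{\varphi_j}$ and set $E_{ij}=\ket{\varphi_i}\bra{\varphi_j}$ as in the discussion before (\ref{freform}); a direct computation gives
$$
L_A(E_{ij})=A E_{ij}=\lambda_i E_{ij}, \qquad R_A(E_{ij})=E_{ij} A=\lambda_j E_{ij},
$$
so the $E_{ij}$ form a joint eigenbasis of $L_A$ and $R_A$ with eigenvalues $(\lambda_i,\lambda_j).$ Consequently, for any $t \in [0,1]$ the operator $tL_A+(1-t)R_A$ has spectral decomposition $\sum_{i,j}(t\lambda_i+(1-t)\lambda_j)\ket{E_{ij}}\bra{E_{ij}}$ on the Hilbert space $\bM_n^{sa}$ equipped with the Hilbert--Schmidt inner product.

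Next I would apply the (joint) functional calculus: since $t L_A + (1-t) R_A$ is a self-adjoint operator on $\bM_n^{sa}$ with spectrum contained in the convex hull of the spectrum of $A,$ and since $f' \in C^0((0,\infty))$ by the hypothesis $f \in C^1((0,\infty)),$ the standard functional calculus yields
$$
f'\!\ler{t L_A + (1-t) R_A} = \sum_{i,j} f'\!\ler{t\lambda_i + (1-t)\lambda_j}\ket{E_{ij}}\bra{E_{ij}}.
$$
Interchanging the (finite) sum with the Riemann integral on $[0,1]$ gives
$$
\int_0^1 f'\!\ler{t L_A + (1-t) R_A}\,\diff{t} = \sum_{i,j} \ler{\int_0^1 f'\!\ler{t\lambda_i + (1-t)\lambda_j}\,\diff{t}}\ket{E_{ij}}\bra{E_{ij}}.
$$

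Finally, I would compare with (\ref{freform}): after the change of variables $t\mapsto 1-t$ (or equivalently rewriting $\lambda_j+t(\lambda_i-\lambda_j)=t\lambda_i+(1-t)\lambda_j$), the coefficient in front of $\ket{E_{ij}}\bra{E_{ij}}$ in the right-hand side above coincides exactly with the Hermite divided-difference weight appearing in (\ref{freform}). Thus the two operators agree on the basis $\{E_{ij}\}$ of $\bM_n^{sa},$ and by linearity on all of $\bM_n^{sa},$ which is precisely the sense in which the identity is claimed. The only delicate point is to justify the superoperator functional calculus --- but this is straightforward because $L_A$ and $R_A$ commute and have real spectrum inside $(0,\infty),$ so no subtler hypothesis on $f'$ than continuity is needed.
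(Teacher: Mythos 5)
Your proposal is correct and follows essentially the same route as the paper: diagonalise $L_A$ and $R_A$ simultaneously on the matrix units $E_{ij}$, apply the functional calculus to get $f'\ler{tL_A+(1-t)R_A}=\sum_{i,j}f'(t\lambda_i+(1-t)\lambda_j)\ket{E_{ij}}\bra{E_{ij}}$, integrate over $t$, and match the result with the Hermite formula (\ref{freform}). The only cosmetic difference is that you make the commutativity of $L_A$ and $R_A$ and the location of the joint spectrum explicit, which the paper leaves implicit; note only that $\{E_{ij}\}$ is a basis of $\bM_n$ rather than of $\bM_n^{sa}$, so the identity is first verified on $\bM_n$ and then restricted, exactly as the paper's remark after (\ref{freform}) indicates.
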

\begin{proof}
Let us use the notations $A=\sum_j \lambda_j \ket{\varphi_j}\bra{\varphi_j}$ and $E_{ij}=\ket{\varphi_i}\bra{\varphi_j}$ again. It is easy to check that
$$
L_A(E_{ij})=\lambda_i E_{ij}, \ R_A(E_{ij})= \lambda_j E_{ij},
$$
hence with $P_{ij}:=\ket{E_{ij}}\bra{E_{ij}}$ we have
$$
L_A= \sum_{i,j} \lambda_i P_{ij}, \ R_A= \sum_{i,j} \lambda_j P_{ij}.
$$
Therefore
$$
f'\ler{t L_A + (1-t)R_A}= \sum_{i,j} f'(t \lambda_i +(1-t)\lambda_j) P_{ij},
$$
and
$$
\int_{0}^{1} f'\ler{t L_A + (1-t)R_A} \mathrm{d}t= \sum_{i,j} \int_{0}^{1} f'\ler{\lambda_j +t(\lambda_i-\lambda_j)} \mathrm{d}t \ket{E_{ij}}\bra{E_{ij}},
$$
and this exactly the formula that appeared in (\ref{freform}).
\qed
\end{proof}
Lemma \ref{intrepr} and Lemma \ref{diffrepr} have an immediate consequence.

\begin{cor} \label{kov1}
For $f \in C^2 ((0, \infty)),$ the Bregman divergence can be written as
\be \label{jorepr}
H_f (X,Y)=\int_{s=0}^{1} \int_{t=0}^{1}  (1-s)\tr (X-Y)  f''\ler{ t L_{Y+s(X-Y)} + (1-t)R_{Y+s(X-Y)} } (X-Y) \diff{t} \diff{s}.
\ee
\end{cor}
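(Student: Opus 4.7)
The plan is to chain together the two preceding lemmas by substitution, which is exactly what the phrase ``immediate consequence'' suggests.

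First I would recall the integral representation provided by Lemma \ref{intrepr}:
\[
H_f(X,Y) = \int_{s=0}^{1} (1-s)\,\tr (X-Y)\,\D f'[Y+s(X-Y)]\,(X-Y)\,\diff{s}.
\]
Since $f \in C^2((0,\infty))$, we have $f' \in C^1((0,\infty))$, so Lemma \ref{diffrepr} applies with $f$ replaced by $f'$ and $A$ replaced by the positive definite matrix $Y+s(X-Y)$ (which lies in $\bM_n^{++}$ for every $s \in [0,1]$, since $X, Y \in \bM_n^{++}$ and the positive cone is convex). This yields
\[
\D f'[Y+s(X-Y)] = \int_{t=0}^{1} f''\!\ler{t L_{Y+s(X-Y)} + (1-t) R_{Y+s(X-Y)}}\,\diff{t}.
\]

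Next I would substitute this identity into the integrand of the expression from Lemma \ref{intrepr}, giving a double integral over $(s,t) \in [0,1]^2$. Since $\tr$ is a linear and continuous functional, it commutes with the $t$-integration; the integrand is also continuous in $(s,t)$, so Fubini permits writing the result as the iterated integral in the stated form. Combining everything yields
\[
H_f(X,Y) = \int_{s=0}^{1}\int_{t=0}^{1} (1-s)\,\tr(X-Y)\,f''\!\ler{t L_{Y+s(X-Y)} + (1-t) R_{Y+s(X-Y)}}(X-Y)\,\diff{t}\,\diff{s},
\]
which is exactly (\ref{jorepr}).

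There is essentially no obstacle here: everything reduces to plugging the formula for $\D f'$ from Lemma \ref{diffrepr} into the formula for $H_f$ from Lemma \ref{intrepr}. The only small point worth checking is that the multiplication operators $L_{Y+s(X-Y)}$ and $R_{Y+s(X-Y)}$ commute (they act by multiplication on opposite sides), so the expression $f''(tL_{\cdot}+(1-t)R_{\cdot})$ is unambiguously defined via the functional calculus on $\mc{B}(\bM_n)$, exactly as exploited in the proof of Lemma \ref{diffrepr}.
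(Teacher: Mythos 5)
Your proposal is correct and is precisely the derivation the paper intends: the paper offers no written proof, merely calling the corollary an ``immediate consequence'' of Lemma \ref{intrepr} and Lemma \ref{diffrepr}, and your substitution of the formula for $\D f'[Y+s(X-Y)]$ (Lemma \ref{diffrepr} applied to $f'$, valid since $f\in C^2$) into the integral representation of Lemma \ref{intrepr} is exactly that chaining. The auxiliary remarks about $Y+s(X-Y)\in\bM_n^{++}$, linearity of the trace, and the commuting of $L$ and $R$ are all sound.
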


\begin{thm} \label{mellek}
Let $f \in  C^2((0, \infty))$ be a convex function. If $f''$ is operator convex and numerically non-increasing, then the Bregman $f$-divergence
$$
H_f: \bM_n^{++} \times \bM_n^{++} \rightarrow \R^+; \quad (X,Y) \mapsto H_f(X,Y)
$$
is jointly convex.
\end{thm}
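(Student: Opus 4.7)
My plan is to deduce the joint convexity of $H_f$ directly from the integral representation provided by Corollary \ref{kov1}. Since a non-negative integral of jointly convex functionals is itself jointly convex, it suffices, for each fixed $s,t\in[0,1]$, to prove joint convexity in $(X,Y)$ of the integrand
$$\tr(X-Y)\,f''\ler{tL_{Y+s(X-Y)}+(1-t)R_{Y+s(X-Y)}}(X-Y).$$
I introduce the jointly affine variables $W:=X-Y\in\bM_n^{sa}$ and $Z:=(1-s)Y+sX\in\bM_n^{++}$. Since $L_Z$ and $R_Z$ commute and are positive on $\bM_n^{sa}$ whenever $Z>0$, the operator $M:=tL_Z+(1-t)R_Z\in\mc{B}(\bM_n^{sa})$ is positive and affine in $Z$. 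The task thus reduces to establishing the joint convexity of
$$\Psi_t(W,Z):=\inner{W}{f''(M)(W)}$$
on $\bM_n^{sa}\times\bM_n^{++}$, with $\inner{\cdot}{\cdot}$ the Hilbert-Schmidt inner product.

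The key auxiliary claim I would then establish is that $f''$ is operator monotone decreasing on $(0,\infty)$. This follows from the combined hypotheses via a classical result of Kraus (see also Hiai-Petz): an operator convex function on $(0,\infty)$ whose scalar derivative is non-positive at infinity is automatically operator monotone decreasing, and the numerically non-increasing assumption on $f''$ supplies exactly this condition. Consequently $f''$ admits the Loewner integral representation
$$f''(x)=c+\int_{0}^{\infty}\frac{1}{x+\lambda}\,d\mu(\lambda),$$
for some $c=\lim_{x\to\infty}f''(x)\geq 0$ and some positive Borel measure $\mu$ on $(0,\infty)$.

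Inserting this representation into $\Psi_t$ via the functional calculus at the positive operator $M$ yields
$$\Psi_t(W,Z)=c\,\tr(W^2)+\int_{0}^{\infty}\inner{W}{(M+\lambda)^{-1}(W)}\,d\mu(\lambda).$$
The first summand is manifestly jointly convex (convex in $W$, constant in $Z$). For the integrand of the second summand I would invoke Proposition~4.3 of \cite{ha06a} applied to the Hilbert space $\bM_n^{sa}$ with its Hilbert-Schmidt inner product: this gives joint convexity of $(V,T)\mapsto\inner{V}{T^{-1}(V)}$ on $\bM_n^{sa}\times\mc{B}^{++}(\bM_n^{sa})$. Composing with the affine, positivity-preserving map $Z\mapsto tL_Z+(1-t)R_Z+\lambda$ produces joint convexity in $(W,Z)$ for every $\lambda>0$. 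Non-negative integration against $d\mu(\lambda)$ and then against the measure $(1-s)\,dt\,ds$ on $[0,1]^2$ preserves joint convexity, completing the argument.

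The main obstacle I anticipate is supplying a clean justification of the implication ``operator convex plus numerically non-increasing implies operator monotone decreasing on $(0,\infty)$'', which is what unlocks the Loewner representation of $f''$. Once this is granted, the rest of the proof is a routine composition of the well-known joint convexity of $(V,T)\mapsto\inner{V}{T^{-1}(V)}$ with affine maps, followed by integration.
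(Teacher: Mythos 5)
Your argument is correct and follows essentially the same route as the paper's first proof of Theorem \ref{mellek}: the integral representation of Corollary \ref{kov1}, the affine substitutions in $(X,Y)$, and the joint convexity of $(V,T)\mapsto\inner{V}{T^{-1}(V)}$ from \cite[Prop. 4.3]{ha06a}. The only real difference is presentational: you make explicit the L\"owner-type representation of $f''$ --- which rests on the fact that a numerically non-increasing operator convex function on $(0,\infty)$ is operator monotone decreasing, for which the paper cites \cite[Thm. 3.1]{ah11} rather than Kraus --- a step the paper's first proof only alludes to and defers to its second proof, and aside from the harmless omission of a possible atom of $\mu$ at $\lambda=0$ (covered anyway since $M$ is positive definite), everything matches.
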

%%%%%%%%%%%%%%%%%%%%%%%%%%%%%%%%%%%%%%%%%%%%%%%%%%%%%%%%%%%%%%%%%%%%%%%%%%%%%%%%%%%%%%%%%%
%%%%%%%%%%%%%%%%%%%%%%%%%%%%%%%%%%%%%%%%%%%%%%%%%%%%%%%%%%%%%%%%%%%%%%%%%%%%%%%%%%%%%%%%%%%%%%%%%%%%%%%

\emph{First proof of Theorem \ref{mellek}:}

On a Hilbert space $\mc{H}$ the map
$$\
\mc{B}(\mc{H})^{++} \times \mc{H} \rightarrow \R: \ (A, \xi) \mapsto \inner{\xi}{\varphi(A) (\xi)}
$$
is jointly convex if $\varphi: (0, \infty) \rightarrow \R$ is operator convex and numerically non-increasing. This fact relies on the joint convexity of the map $(A, \xi) \mapsto \inner{\xi}{A^{-1} \xi}$ --- which is stated in \cite[Prop. 4.3]{ha06a} and may be derived from \cite[Thm. 1]{lr74} --- and on an integral representation of the functions $\varphi$ with the above property. This representation will be discussed in the second proof of this theorem.
The maps
$$
(X,Y) \mapsto t L_{Y+s(X-Y)} + (1-t)R_{Y+s(X-Y)}
$$
and  $(X,Y) \mapsto X-Y$ are affine, and with the Hilbert-Schmidt inner product (\ref{jorepr}) can be written as
\be \label{joreprHS}
H_f (X,Y)=\int_{s=0}^{1} \int_{t=0}^{1}  (1-s)\inner {X-Y}  {f''\ler{ t L_{Y+s(X-Y)} + (1-t)R_{Y+s(X-Y)} } (X-Y)} \diff{t} \diff{s},
\ee
hence $H_f$ is jointly convex if $f''$ is operator convex and non-increasing.
\qed
\par
%%%%%%%%%%%%%%%%%%%%%%%%%%%%%%%%%%%%%%%%%%%%%%%%%%%%%%%%%%%%%%%%%%%%%%%%%%%%%%%%%%%%%%%%%%%%%%%%%%%%%%%

We may provide another proof of this theorem.

\begin{prop}
Let $\mc{F}(A,B)$ denote the set of all $A \rightarrow B$ functions. The map
\be \label{breg_alg}
H: C^2((0, \infty)) \rightarrow \mc{F}\ler{\bM_n^{++} \times \bM_n^{++},\R}, \, f \mapsto H_f (\cdot, \cdot)
\ee
is linear, and the kernel is the subspace of affine functions, that is, $$\mathrm{Ker}(H)=\{x\mapsto ax+b \, | \, a,b \in \R\}.$$
\end{prop}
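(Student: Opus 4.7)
The plan is to handle the two assertions separately, both working directly from formula (\ref{bregdefvar}).

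For linearity, the point is that for any fixed $A \in \bM_n^{++}$ the standard-matrix-function evaluation $f \mapsto f(A)$ is linear in $f$ (it sends $A = \sum_j \lambda_j P_j$ to $\sum_j f(\lambda_j) P_j$, which depends linearly on the values $f(\lambda_j)$), and $f \mapsto f'$ is of course linear on $C^1((0,\infty))$. Consequently $f(X) - f(Y) - f'(Y)(X-Y)$ depends linearly on $f$, and composing with the trace preserves this. So $f \mapsto H_f(\cdot,\cdot)$ is linear.

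For the kernel, the direction $f$ affine $\Rightarrow f \in \mathrm{Ker}(H)$ is a direct substitution: if $f(x) = ax+b$, then $f(X) = aX+bI$, $f'(Y) = aI$, and $f'(Y)(X-Y) = a(X-Y)$, so the bracket in (\ref{bregdefvar}) vanishes identically. For the converse, I would restrict attention to scalar matrices. Taking $X = xI,\, Y = yI$ for $x, y > 0$, everything commutes and the standard matrix functions reduce to their scalar values, giving
$$
H_f(xI, yI) \;=\; n\bigl( f(x) - f(y) - f'(y)(x-y) \bigr).
$$
The hypothesis $H_f \equiv 0$ then forces $f(x) = f(y) + f'(y)(x-y)$ for all $x, y > 0$. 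Differentiating in $x$ with $y$ fixed yields $f'(x) = f'(y)$ for every $x, y$, so $f'$ is constant on $(0,\infty)$ and $f$ is affine.

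There is no real obstacle here: each step is essentially a direct consequence of the definitions. The only mild subtlety is that the converse relies on scalar matrices being enough to detect affineness of $f$, but this is immediate via restriction to the scalar subalgebra and the elementary fact that a $C^2$ function on $(0,\infty)$ with vanishing scalar Bregman divergence must have constant derivative; the $C^2$ regularity that legitimizes this differentiation is already part of the domain of $H$.
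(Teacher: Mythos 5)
Your proof is correct, but it takes a different route from the paper's. The paper's own proof is very terse: it declares the linearity obvious and reads the kernel off from the integral representation (\ref{joreprHS}), identifying $\mathrm{Ker}(H)$ with the kernel of the second-derivative operator $C^2((0,\infty))\rightarrow C((0,\infty))$, whose kernel is exactly the affine functions. You instead work from the elementary trace formula (\ref{bregdefvar}): linearity follows from the linearity of $f\mapsto f(A)$ and $f\mapsto f'$, the inclusion of affine functions in the kernel is a direct substitution, and the reverse inclusion comes from restricting to scalar matrices $X=xI$, $Y=yI$ and differentiating the resulting scalar identity $f(x)=f(y)+f'(y)(x-y)$ in $x$. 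Your approach buys independence from Lemma \ref{diffrepr} and Corollary \ref{kov1}, and it makes fully explicit the converse step that the paper leaves as ``easy to see'' (which, for a general $C^2$ function with no sign condition on $f''$, does deserve the extra sentence you supply); the paper's approach buys brevity and a conceptually clean statement that $H$ factors through $f\mapsto f''$. Both arguments are sound.
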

\begin{proof}
The linearity is obvious, and from the integral formula (\ref{joreprHS}) it is easy to see that the kernel of $H$ is equal to the kernel of the operator $\ddperd{x^2}: C^2((0,\infty)) \rightarrow C((0,\infty)).$
\qed 
\end{proof}
Therefore, if $f$ can be written as $f=\sum_{j=1}^k f_j,$ where the $f_j$'s define jointly convex Bregman divergence, then $H_f(\cdot, \cdot)$ is jointly convex. The affine part of a function can be omitted. \par

\emph{Second proof of Theorem \ref{mellek}:}
If $f \in C^2 ((0, \infty))$ is convex function, then $f''$ is numerically non-increasing and operator convex if and only if
\be \label{int_form_0}
f''(x)= \gamma + \int_{0}^{\infty} \frac{1}{\lambda + x} \diff{\mu (\lambda)},
\ee
where $\gamma \geq 0$ and $\mu$ is a nonnegative measure on $[0,\infty)$ such that
$$
\int_{0}^{\infty} \frac{1}{1+\lambda} \diff{\mu (\lambda)} < \infty.
$$
This fact is stated in this form in \cite[Thm. 3.1]{ah11}. Now we intend to use the 'only if' direction, hence we outline the key steps of the proof of Ando and Hiai. A rather complex argument shows that a numerically non-increasing and operator convex function is operator monotone decreasing. At this point, Ando and Hiai provide a slightly simplificated version of the original argument of Hansen \cite{ha06b} to verify the integral representation. This is the following. If $f''$ is operator monotone decreasing then $f''\ler{\frac{1}{x}}$ is operator monotone, hence by \cite[pp. 144-145]{bhatia} it has the form
\be \label{monrepr}
f''\ler{\frac{1}{x}}= \alpha+ \beta x + \int_{0}^{\infty} \frac{(\lambda+1)x}{\lambda+x}\diff{\nu(\lambda)}
\ee
where $\alpha, \beta \geq 0$ and $\nu$ is a nonnegative finite measure on $(0,\infty).$ Let $\diff{\tilde\mu(\lambda)}:= \diff{\nu(\frac{1}{\lambda})}$ on $(0,\infty)$ and $\tilde \mu (\{0\}):=\beta.$ Finally, $\diff{\mu (\lambda)}:=(\lambda+1)\diff{\tilde \mu (\lambda)}.$
By these operations, the representation (\ref{monrepr}) is transformed to (\ref{int_form_0}). \par

Integrating (\ref{int_form_0}) two times with respect to $x$ we get
\be \label{int_form_1}
f(x)=\alpha + \beta x +\frac{\gamma}{2}x^2  + \int_{0}^{\infty} \ler{(\lambda + x)\ler{\log{(\lambda + {x})}-\log{(\lambda+1)}}-(x-1)} \diff{\mu (\lambda)},
\ee
where $\alpha, \beta \in \R.$
One can see that $f(x)$ is the sum of the affine part
$$ a(x)=\alpha + \beta x -\int_{0}^{\infty} \ler{\log{(\lambda+1)}(\lambda + x)+(x-1)}\diff{\mu (\lambda)},$$
the quadratic part $q(x)=\frac{\gamma}{2}x^2$ and the ``entropic" part
$$ e(x)= \int_{0}^{\infty} (\lambda + x) \log{(\lambda + x)} \diff{\mu (\lambda)}.$$

The quadratic part $H_q(X,Y)=\frac{\gamma}{2}\tr (X-Y)^2$ is clearly jointly convex. By the result of \cite{Linb74}, the same statement holds for the Bregman divergence induced by the standard entropy function $\varphi_0(x)=x \log{x},$
$$
H_{\varphi_0}(X,Y)=\tr \ler{X\ler{\log X - \log Y}-(X-Y)}.
$$
On the other hand, one can check that the Bregman divergence induced by the shifted entropy function $\varphi_\lambda(x)=(x+\lambda) \log{(x+\lambda)}$ can be expressed as
\be \label{shift}
H_{\varphi_\lambda}(X,Y)=H_{\varphi_0}(X+\lambda I, Y+\lambda I).
\ee
On the whole, if $f''$ is numerically decreasing and operator convex, then the Bregman divergence $H_f$ can be written as $H_f=H_q+H_a+H_e,$ where $H_a=0,$ $H_q$ is obviously jointly convex and
\be \label{utcso}
H_e(X,Y)=\int_{0}^{\infty} H_{\varphi_\lambda}(X,Y)\diff{\mu (\lambda)}=\int_{0}^{\infty} H_{\varphi_0}(X+\lambda I, Y+\lambda I)\diff{\mu (\lambda)}.
\ee
The map $(X,Y) \mapsto (X+\lambda I, Y+\lambda I)$ is affine, hence (\ref{utcso}) is jointly convex, and this completes the proof.
\qed

%%%%%%%%%%%%%%%%%%%%%%%%%%%%%%%%%%%%%%%%%%%%%%%%%%%%%%%%%%%%%%%%%%%%%%%%%%%%%%%%%%%%%%%%%%%%%%%%%%%%%%%%%%%%%%%%%%%%%%%%%%%%%%%%%%%

%%%%%%%%%%%%%%%%%%%%%%%%%%%%%%%%%%%%%%%%%%%%%%%%%%%%%%%%%%%%%%%%%%%%%%%%%%%%%%%%%%%%%%%%%%%%%%%%%%%%%%%%%%%%%%%%%%%%%%%%%%%%%%%%%%%
%%%%%%%%%%%%%%%%%%%%%%%%%%%%%%%%%%%%%%%%%%%%%%%%%%%%%%%%%%%%%%%%%%%%%%%%%%%%%%%%%%%%%%%%%%%%%%%%%%%%%%%%%%%%%%%%%%%%%%%%%%%%%%%%%%%

\section{An application - the Tsallis entropy}

For any real $q,$ one can define the deformed logarithm (or $q$-logarithm) function $\ln_q: \R^{++} \rightarrow \R$ by
$$
\ln_{q}\,x =\int_{1}^{x} t^{q-2} \mathrm{d}t=\begin{cases} \frac{x^{q-1}-1}{q-1} &\mbox{if }
q \neq 1 \, , \\ \ln\, x & \mbox{if } q=1\, . \end{cases}
$$
If we define
$
f_q(x):=x \ln_q (x)
$
then the \emph{Tsallis entropy} \cite{A-D, Dar} of a density matrix $\rho$ (i.e. $\rho\ge 0$ and 
$\tr\rho=1$) is given by
$$
S_q(\rho)=\tr f_q (\rho).
$$
Note that if $q>0,$ then $\lim_{x \to 0} f_q (x)=0,$ hence $f_q$ can be extended by continuity, thus the Tsallis entropy is well-defined for singular densities, as well.
By the result of Tropp and Chen, $f_q$ belongs to the Matrix Entropy Class for $1 \leq  q \leq 2$ \cite[Thm. 2.3]{trch}.
Therefore, by Theorem \ref{fo}, $H_{f_q}(\cdot,\cdot)$ is jointly convex. (Alternatively, we may use the well-known operator convexity of the function $x \mapsto x^{-r} \,\, (0\leq r \leq 1)$ and refer to Theorem \ref{mellek}.)

One can compute that for $q \neq 1$ we have
$$
H_{f_q}(A,B)=\tr B^q+\frac{1}{q-1}\ler{\tr A^q- q \tr A B^{q-1}}.
$$
The Bregman divergence is unitary invariant, that is, $H_f(UAU^*, UBU^*)=H_f(A,B)$ for all unitary matrices $U$. If $X \in \bM_m \otimes \bM_n$ then there are some unitaries such that
$$
X_1 \otimes \frac{1}{n} I_2 = \sum_{k=1}^{n^2} \frac{1}{n^2} U_k X U_k^*,
$$
where $X_1=\tr_2 X$ and $I_2$ is the identity in $\bM_n$ (see e. g. \cite{BP, Furu-fund}), hence from the joint convexity it follows that the Bregman divergence is monotone in the following sense:
\be \label{mon}
H_f \ler{X_1 \otimes \frac{1}{n} I_2, Y_1 \otimes \frac{1}{n} I_2} \leq H_f \ler{X, Y}
\ee
if $f$ satisfies the condition (\ref{opkonk}) in Theorem \ref{fo}.
Let us apply (\ref{mon}) to $f_q$ with $1 < q \leq 2$ and
\be \label{val}
X=\rho_{123} \in \mc{B}^{+}\ler{\mc{H}_1 \otimes \mc{H}_2 \otimes \mc{H}_3}, \, Y=\frac{1}{d_1} I_1 \otimes \rho_{23},
\ee
where $\mc{H}_i$ is a finite dimensional Hilbert space ($i \in \{1,2,3\}$), $d_i=\mathrm{dim} \mc{H}_i$ and $\rho_{23}=\tr_1 \rho_{123}.$ The idea of this choice comes from the tutorial \cite{NP05}. 
With this choice we get
\be \label{monkonkr} 
H_{f_q} \ler{\rho_{12} \otimes \frac{1}{d_3} I_3, \frac{1}{d_1} I_1 \otimes \rho_2 \otimes \frac{1}{d_3} I_3} \leq H_{f_q} \ler{\rho_{123}, \frac{1}{d_1} I_1 \otimes \rho_{23}}.
\ee
Straightforward computations show that the left hand side of (\ref{monkonkr}) equals to
$$
\frac{1}{q-1}\ler{d_3^{1-q} \tr \rho_{12}^q-(d_1 d_3)^{1-q} \tr \rho_2^q}
$$
and the right hand side is
$$
\frac{1}{q-1} \ler{ \tr \rho_{123}^q-d_1^{1-q} \tr \rho_{23}^q}.
$$
The result of this computation can be summarized as follows.
\begin{thm} \label{tsaqssa}
If $\mc{H}_i$ is a finite dimensional Hilbert space for any $i \in \{1,2,3\},$ $d_i=\mathrm{dim} \mc{H}_i,$ $1 \leq q \leq 2,$ then for any $\rho_{123} \in \mc{B}^{+}\ler{\mc{H}_1 \otimes \mc{H}_2 \otimes \mc{H}_3}$
the inequality
\be \label{all}
d_3^{1-q} \tr \rho_{12}^q + d_1^{1-q} \tr \rho_{23}^q \leq \tr \rho_{123}^q +(d_1 d_3)^{1-q} \tr \rho_2^q.
\ee
holds, where notations like $\rho_{12}$ denote the appropriate reduced matrices.
\end{thm}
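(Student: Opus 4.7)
The plan is to deduce Theorem \ref{tsaqssa} directly from the monotonicity inequality (\ref{mon}) applied to $H_{f_q}$, with the arguments chosen as in (\ref{val}), namely $X = \rho_{123}$ and $Y = \tfrac{1}{d_1} I_1 \otimes \rho_{23}$. The first task is to verify that $H_{f_q}$ satisfies the joint convexity hypothesis of (\ref{mon}) in the range $1 \le q \le 2$. Two routes are available. One is to invoke the Tropp--Chen result that $f_q$ lies in the Matrix Entropy Class for these $q$ and then apply Theorem \ref{fo}. The self-contained alternative is Theorem \ref{mellek}: we have $f_q''(x) = q x^{q - 2}$, and since $q - 2 \in [-1, 0]$ the map $x \mapsto x^{q-2}$ on $(0, \infty)$ is a standard example of an operator convex and numerically non-increasing function. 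Joint convexity then yields (\ref{mon}) via the unitary averaging identity $X_1 \otimes \tfrac{1}{n} I_2 = \tfrac{1}{n^2} \sum_k U_k X U_k^*$ spelled out before the theorem.

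With (\ref{mon}) in hand the remainder is explicit computation. Using the closed form $H_{f_q}(A, B) = \tr B^q + \tfrac{1}{q - 1}\bigl(\tr A^q - q \tr A B^{q-1}\bigr)$ (valid for $q \neq 1$), I evaluate both sides of (\ref{monkonkr}). On the right-hand side $B = \tfrac{1}{d_1} I_1 \otimes \rho_{23}$ is a tensor product, so $B^q = d_1^{-q} I_1 \otimes \rho_{23}^q$ and $\tr B^q = d_1^{1 - q} \tr \rho_{23}^q$; the cross term $\tr A B^{q-1}$ equals $d_1^{1 - q}$ times $\tr \bigl[\rho_{123}(I_1 \otimes \rho_{23}^{q-1})\bigr] = \tr[\rho_{23} \rho_{23}^{q-1}] = \tr \rho_{23}^q$, where the partial trace $\tr_1 \rho_{123} = \rho_{23}$ is used. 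An analogous calculation on the left-hand side, where $A = \rho_{12} \otimes \tfrac{1}{d_3} I_3$ and $B = \tfrac{1}{d_1} I_1 \otimes \rho_2 \otimes \tfrac{1}{d_3} I_3$, yields $\tr A^q = d_3^{1 - q} \tr \rho_{12}^q$ and $\tr B^q = \tr A B^{q-1} = (d_1 d_3)^{1 - q} \tr \rho_2^q$. Substituting these and simplifying, each side becomes $\tfrac{1}{q - 1}$ times a difference of two of the four trace-power terms in (\ref{all}), and multiplying through by $q - 1 > 0$ delivers the inequality.

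The main obstacle is purely notational: keeping track of which copies of the identity and which reduced states appear in each partial trace, and matching the dimensional prefactors $d_1^{1 - q}$, $d_3^{1 - q}$, $(d_1 d_3)^{1 - q}$ carefully. No new analytic input is required. The $q = 1$ boundary case degenerates to a trivial identity (all four traces appearing in (\ref{all}) equal $\tr \rho_{123}$), and can alternatively be obtained as the continuous limit of the general case.
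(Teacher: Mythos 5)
Your proposal is correct and follows essentially the same route as the paper: both verify the joint convexity of $H_{f_q}$ for $1\leq q\leq 2$ (the paper likewise offers the Tropp--Chen/Theorem~\ref{fo} route and the operator convexity of $x\mapsto x^{q-2}$ via Theorem~\ref{mellek} as alternatives), then apply the monotonicity (\ref{mon}) with the choice (\ref{val}) and evaluate both sides of (\ref{monkonkr}) using the closed form of $H_{f_q}$. Your trace computations and dimensional prefactors agree with the paper's, so no further comparison is needed.
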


The fact that the strong subadditivity of the Tsallis entropy
$$ %\label{qssa}
\tr \rho_{12}^q + \tr \rho_{23}^q \leq \tr \rho_{123}^q +\tr \rho_2^q
$$
does not hold in general \cite{PV} (but holds for classical probability distributions \cite{Furu}) makes Theorem \ref{tsaqssa} remarkable. Furthermore, one can not state more that (\ref{all}), the inequality is sharp. The density matrix
\be \label{exa}
\rho_{123}=\left[ \begin {array}{cccccccc}  0 &0&0&0&0&0&0&0\\
\noalign{\medskip}0& 0 &0&0&0&0&0&0\\\noalign{\medskip}0&0& \frac{1}{4}&0& \frac{1}{4}&0&0&0\\
\noalign{\medskip}0&0&0& \frac{1}{4}&0& \frac{1}{4}&0&0\\\noalign{\medskip}0&0&
\frac{1}{4}&0& \frac{1}{4}&0&0&0\\
\noalign{\medskip}0&0&0& \frac{1}{4}&0& \frac{1}{4}&0&0\\\noalign{\medskip}0&0&0&0&0&0&0&0\\
\noalign{\medskip}0&0&0&0&0&0&0& 0
\end {array} \right] \in \mc{B}\ler{\C^2 \otimes \C^2 \otimes \C^2}
\ee
has the reduced densities
$$
\rho_{12}=\left[ \begin {array}{cccc}  0&0&0&0\\\noalign{\medskip}0& \frac{1}{2}& \frac{1}{2}&0\\
\noalign{\medskip}0& \frac{1}{2}& \frac{1}{2}&0\\\noalign{\medskip}0&0&0&0\end {array} \right],
\quad
\rho_{23}=\left[ \begin {array}{cccc}  \frac{1}{4}& 0& 0& 0\\\noalign{\medskip} 0& \frac{1}{4}& 0& 0\\
\noalign{\medskip} 0& 0& \frac{1}{4}& 0\\\noalign{\medskip} 0& 0& 0& \frac{1}{4}\end {array} \right],
\quad
\rho_2=\left[ \begin {array}{cc}  \frac{1}{2}& 0\\\noalign{\medskip} 0& \frac{1}{2}\end {array} \right],
$$
hence
$$
d_3^{1-q} \tr \rho_{12}^q + d_1^{1-q} \tr \rho_{23}^q =2^{1-q}+2^{1-q} 4^{1-q}= \tr \rho_{123}^q +(d_1 d_3)^{1-q} \tr \rho_2^q.
$$
This example appeared in \cite{PV} to demonstrate that the Tsallis entropy is not strongly subadditive.
\par
Note that (\ref{all}) is equivalent to
$$
(d_1 d_2 d_3)^{1-q} S_q(\rho_{123})+d_2^{1-q} S_q(\rho_{2})+\frac{(d_1 d_2 d_3)^{1-q}-1}{q-1}+\frac{d_2^{1-q}-1}{q-1}
$$
$$
\leq
(d_2 d_3)^{1-q} S_q(\rho_{23})+(d_1 d_2)^{1-q} S_q(\rho_{12})+\frac{(d_2 d_3)^{1-q}-1}{q-1}+\frac{(d_1 d_2)^{1-q}-1}{q-1}, 
$$
which gives the strong subadditivity
$$
S_q(\rho_{123})+S_q(\rho_{2}) \leq S(\rho_{23})+S(\rho_{12})
$$
of the von Neumann entropy, if we take the limit $q \to 1.$ \par
Another inequality can be derived if we consider (\ref{mon}) with $X=\rho_{12}$ and $Y=\frac{1}{d_1} I_1 \otimes \rho_2.$ With this choice
$$
H_{f_q} \ler{\rho_{1} \otimes \frac{1}{d_2} I_2, \frac{1}{d_1} I_1 \otimes \frac{1}{d_2} I_2}=d_2^{1-q} \tr \rho_1^q -(d_1 d_2)^{1-q}
$$
and
$$
H_{f_q} \ler{\rho_{12}, \frac{1}{d_1} I_1 \otimes \rho_{2}}=\tr \rho_{12}^q -d_1^{1-q} \tr \rho_{2}^q,
$$
hence the monotonicity (\ref{mon}) gives that
\be \label{ketto}
d_1^{q-1} \tr \rho_1^q + d_2^{q-1} \tr \rho_2^q \leq (d_1 d_2)^{q-1} \tr \rho_{12}^q +1.
\ee
Note that (\ref{ketto}) is a special case of (\ref{all}) with the trivial subsystem $\mc{H}_2=\C$ in (\ref{val}).

%%%%%%%%%%%%%%%%%%%%%%%%%%%%%%%%%%%%%%%%%%%%%%%%%%%%%%%%%%%%%%%%%%%%%%%%%%%%%%%%%%%%%%%%%%%%%%%%%%%%%%%%%%%%%%%%%%%%%%%%%%%%%%%%%%%
\section{The relation of joint convexity and monotonicity under stochastic maps}
For \emph{homogeneous} relative entropy-type maps, the joint convexity and the monotonicity under stochastic maps is equivalent \cite[remarks after Def. 2.3]{lr99}. However, the Bregman divergence does not need to be homegeneous. For example,
\be \label{qhom}
H_{f_q}( \lambda A, \lambda B)= \lambda^q H_{f_q}(A,B)
\ee
for $0 < \lambda$ (and any positive $q$).
\par
We show that a jointly convex Bregman divergence is not monotone in general. In order to see this surprising fact, we create an example that shows that a family of (jointly convex) Bregman divergences increases under the partial trace, which is a very important stochastic (that is, completely positive trace preserving - CPTP) map \cite{Carlen, lr99}. \par
Easy computations show that for $A, B \in \bM_m^{+}$
\be \label{tobb0}
H_{f} \ler{A \otimes \frac{1}{n} I_2, B \otimes \frac{1}{n} I_2} = n H_{f} \ler{\frac{A}{n}, \frac{B}{n}},
\ee
where $I_2$ is the identity in $\bM_n.$ Recall that the density matrix (\ref{exa}) saturates the inequality (\ref{monkonkr}), that is,

\be \label{mksat} 
H_{f_q} \ler{\rho_{12} \otimes \frac{1}{2} I, \frac{1}{2} I \otimes \rho_2 \otimes \frac{1}{2} I} = H_{f_q} \ler{\rho_{123}, \frac{1}{2} I \otimes \rho_{23}}.
\ee
On the other hand, by (\ref{qhom}) and (\ref{tobb0}),
\be \label{tobb}
H_{f_q} \ler{\rho_{12} \otimes \frac{1}{2} I, \frac{1}{2} I \otimes \rho_2 \otimes \frac{1}{2} I} = 2^{1-q} H_{f_q} \ler{\rho_{12}, \frac{1}{2} I \otimes \rho_2},
\ee
which means that
$$
H_{f_q} \ler{\rho_{12}, \frac{1}{2} I \otimes \rho_2}>H_{f_q} \ler{\rho_{123}, \frac{1}{2} I \otimes \rho_{23}},
$$
so the monotonicity under partial trace fails. \par
This means that the joint convexity does not imply monotonicity, but the converse is true.
We summarize the results in the next theorem.

\begin{thm} \label{jcmon}
Every monotone Bregman divergence is jointly convex. However, there are jointy convex Bregman divergences which are not monotone under stochastic maps. On the other hand, joint convexity implies the monotonicity under the stochastic maps of the form
\be \label{cpspec}
A \mapsto \sum_k c_k U_k A U_k^*,
\ee
where the $U_k$'s are unitaries and $c_k \geq 0,$ $\sum_k c_k =1.$
 
\end{thm}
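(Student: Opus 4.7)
Plan. Part 3 is an immediate consequence of joint convexity and the unitary invariance $H_f(UAU^*, UBU^*) = H_f(A,B)$ (evident from the trace-spectral form (\ref{bregdefvar})): for $\Phi(A) = \sum_k c_k U_k A U_k^*$ with $c_k \ge 0$, $\sum_k c_k = 1$, and $U_k$ unitary, joint convexity gives
\[
H_f(\Phi(A), \Phi(B)) \le \sum_k c_k H_f(U_k A U_k^*, U_k B U_k^*) = \sum_k c_k H_f(A, B) = H_f(A, B).
\]
Part 2 has already been settled in the discussion preceding the theorem: for $1 < q \le 2$, $H_{f_q}$ is jointly convex (by Tropp--Chen and Theorem \ref{fo}) but the tripartite state (\ref{exa}) together with the $q$-homogeneity (\ref{qhom}) and the identity (\ref{tobb0}) produces $H_{f_q}(\rho_{12}, \frac{1}{2} I \otimes \rho_2) > H_{f_q}(\rho_{123}, \frac{1}{2} I \otimes \rho_{23})$, violating monotonicity under the stochastic partial trace.

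The substantive work is Part 1, where I plan to derive two consequences of monotonicity --- subadditivity and sub-$1$-homogeneity --- and chain them. For $A_i, B_i \in \bM_n^{++}$ ($i=1,\dots,N$), I form the block-diagonal matrices $\tilde A = \bigoplus_i A_i$ and $\tilde B = \bigoplus_i B_i$ in $\bM_n \otimes \bM_N$. Block-diagonality forces $H_f(\tilde A, \tilde B) = \sum_i H_f(A_i, B_i)$, and applying the stochastic partial trace (which sends $\tilde A \mapsto \sum_i A_i$ and $\tilde B \mapsto \sum_i B_i$) yields the subadditivity
\[
H_f\Big(\sum_i A_i,\; \sum_i B_i\Big) \le \sum_i H_f(A_i, B_i). \qquad (\dagger)
\]
Next, for an integer $q \ge 1$, the ampliation $\Phi_q(X) = X \otimes \tfrac{1}{q} I_q$ is stochastic (trace-preserving since $\tr(\tfrac{1}{q} I_q) = 1$). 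Because the spectrum of $X \otimes \tfrac{1}{q} I_q$ coincides with that of $X/q$ with all multiplicities multiplied by $q$, and since $\D f[Y \otimes \tfrac{1}{q} I_q](Z \otimes \tfrac{1}{q} I_q) = \D f[Y/q](Z/q) \otimes I_q$, a direct computation gives $H_f(X \otimes \tfrac{1}{q} I_q, Y \otimes \tfrac{1}{q} I_q) = q\, H_f(X/q, Y/q)$. Monotonicity under $\Phi_q$ then forces $H_f(X/q, Y/q) \le \tfrac{1}{q} H_f(X, Y)$. Iterating $(\dagger)$ on $p$ copies of $(X/q, Y/q)$ upgrades this to $H_f\big(\tfrac{p}{q} X, \tfrac{p}{q} Y\big) \le \tfrac{p}{q} H_f(X, Y)$, and continuity of $H_f$ (via Lemma \ref{intrepr}) extends the sub-$1$-homogeneity $H_f(\alpha X, \alpha Y) \le \alpha H_f(X, Y)$ to every $\alpha \in (0, 1]$.

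Chaining $(\dagger)$ with sub-homogeneity finishes Part 1: for any convex combination $(\alpha_i)$ with $X_i, Y_i \in \bM_n^{++}$,
\[
H_f\Big(\sum_i \alpha_i X_i,\; \sum_i \alpha_i Y_i\Big) \le \sum_i H_f(\alpha_i X_i, \alpha_i Y_i) \le \sum_i \alpha_i H_f(X_i, Y_i).
\]
The main obstacle is the sub-$1$-homogeneity step. A generic Bregman divergence is far from being $1$-homogeneous (for instance $H_{f_q}$ is $q$-homogeneous by (\ref{qhom})), so $(\dagger)$ by itself only yields $H_f(\sum \alpha_i X_i, \sum \alpha_i Y_i) \le \sum H_f(\alpha_i X_i, \alpha_i Y_i)$, which is strictly weaker than joint convexity; the scaling identity for the ampliation $\Phi_q$ is exactly what closes this gap.
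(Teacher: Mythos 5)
Your Parts 2 and 3 coincide with the paper's treatment, but your Part 1 is correct by a genuinely different route. The paper forms the block matrices $X=X_1\oplus X_2$, $Y=Y_1\oplus Y_2$ in $\bM_{2n}$ and applies the single stochastic map $\mc{E}(Z)=\frac{1}{2}Z+\frac{1}{2}UZU^*$ with $U$ the swap unitary; since both diagonal blocks of $\mc{E}(X)$ equal $\frac{1}{2}(X_1+X_2)$ and the divergence of a block-diagonal pair is the sum over blocks, monotonicity under $\mc{E}$ reads off directly as midpoint convexity, and continuity of $H_f$ finishes. You instead extract two separate consequences of monotonicity --- subadditivity $(\dagger)$ via the partial trace of a block-diagonal dilation, and sub-$1$-homogeneity via the ampliation $X\mapsto X\otimes\frac{1}{q}I_q$ combined with the scaling identity (\ref{tobb0}) --- and chain them, passing through rational scalars and a continuity argument. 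Both arguments are sound (the paper's own remarks confirm that partial trace and ampliation count as stochastic maps, and your scaling identity is exactly (\ref{tobb0})). What each buys: the paper's version is shorter and establishes the sharper fact that monotonicity under mixtures of unitaries alone --- precisely the class appearing in Part 3 --- already forces joint convexity, so Parts 1 and 3 become a near-equivalence; your version isolates which structural properties (subadditivity and sub-homogeneity) monotonicity confers, at the price of invoking dimension-changing channels and an extra limiting step. One cosmetic point: continuity of $H_f$ follows already from $f\in C^1$ via (\ref{bregdefvar}), so you need not route it through Lemma \ref{intrepr}, which assumes $f\in C^2$.
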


\begin{proof}
Set $X_1, X_2, Y_1, Y_2 \in \bM_n^{+}$ and let the block matrices $X, Y$ and $U \in \bM_{2n}$ be defined by
$$
X=\left[\ba{cc} X_1 & 0 \\ 0 & X_2 \ea \right], \, Y=\left[\ba{cc} Y_1 & 0 \\ 0 & Y_2 \ea \right], \ U=\left[\ba{cc} 0 & I \\ I & 0 \ea \right],
$$
where $I \in \bM_n$ is the identity matrix.
The map
$$
\mc{E}: \bM_{2n} \rightarrow \bM_{2n}, \, X \mapsto \mc{E}(X):=\frac{1}{2}X+\frac{1}{2}UXU^*
$$
is clearly stochastic, and
$$
\mc{E}(X)= \frac{1}{2}\left[\ba{cc} X_1+X_2 & 0 \\ 0 & X_1+X_2 \ea \right], \, \mc{E}(Y)= \frac{1}{2}\left[\ba{cc} Y_1+Y_2 & 0 \\ 0 & Y_1+Y_2 \ea \right].
$$
The Bregman divergence of block-diagonal matrices is the sum of the Bregman divergence of the blocks, hence the monotonicity condition
$$
H_f\ler{\mc{E}(X), \mc{E}(Y)} \leq H_f(X,Y)
$$
means that
$$
2 H_f\ler{\frac{1}{2} (X_1+X_2), \frac{1}{2} (Y_1+Y_2)} \leq H_f(X_1, Y_1)+H_f(X_2, Y_2),
$$
which is the midpoint convexity of $H_f(\cdot, \cdot).$ The Bregman $f$-divergence is continuous (by the assumption $f \in C^{1}((0, \infty))$), hence midpoint convexity implies convexity.
\par
We have shown in this section that for $f_q(x)=\frac{x^q-q}{q-1}$ the corresponding Bregman divergence $H_{f_q} (\cdot, \cdot)$ is jointly convex but it is not monotone under stochastic maps ($1 < q \leq 2$).
\par
In order to check the last statement of the theorem, suppose that $H_f(\cdot, \cdot)$ is jointly convex. If a map $\mc{E}: \bM_n \rightarrow \bM_n$ has the form (\ref{cpspec}), then by the unitary invariance of the Bregman divergence,
$$
H_f \ler{\mc{E}(X),\mc{E}(Y)}=H_f \ler{\sum_k c_k U_k X U_k^*,\sum_k c_k U_k Y U_k^*} \leq \sum_k c_k H_f \ler{ U_k X U_k^*,U_k Y U_k^*}
$$
$$
=\sum_k c_k H_f \ler{X,Y}=H_f \ler{X,Y}.
$$
\qed
\end{proof}
\subsection{A possible consequence of the joint convexity}
Provided that the characterization of the Matrix Entropy Class by Hansen and Zhang (\cite[Thm. 1.2]{hazh}) is true, we can deduce that the joint convexity of the Bregman $f$-divergence implies another monotonicity property.
By the result of Theorem \ref{fo} and by the integral representation (\ref{int_form_0}), if $H_f(\cdot, \cdot)$ is jointly convex, then
$$
f'(x)= \delta+\gamma x+ \int_{0}^{\infty} \log{\ler{\lambda + x}} \diff{\mu (\lambda)},
$$
where $\delta \in \R, \, \gamma \geq 0$ and $\mu$ is the same measure as in (\ref{int_form_0}).
The function $x \mapsto \log{\ler{\lambda + x}}$ is operator monotone on $(0,\infty)$ for any nonnegative $\lambda$ (see e. g. \cite{Carlen, HP}), hence so is $f'$ (the integration keeps the monotonicity).

\par
In the recent paper \cite{ls14} Lewin and Sabin showed that the operator monotonicity of $f'$ is equivalent to the following monotonicity property: for any $A,B \in \bM_n^+$ and $X \in \bM_{n \times k}$ with $X^{*}X \leq I \in \bM_n$ we have
\be \label{mon_var}
H_f(XAX^{*}, XBX^{*}) \leq H_f(A,B).
\ee
Thus we deduced that a jointly convex Bregman divergence is monotone in the sense of (\ref{mon_var}).

{\bf Acknowledgments.} This work was partially supported by the Hungarian Research Grant
OTKA K104206. The authors would like to thank the anonymous referee for his/her constructive remarks and suggestions and Prof. D\'enes Petz for great conversations. DV is grateful to Anna Jencova for illuminating discussions.

%%%%%%%%%%%%%%%%%%%%%%%%%%%%%%%%%%%%%%%%%%%%%%%%%%%%%%%%%%%%%%%%%%%%%%%%%%%%%%%%%%%%%%%%%%%%%%%%%%%%%%%%%%%%%%%%%%%%%%%%%%%%%%%%%%%
%%%%%%%%%%%%%%%%%%%%%%%%%%%%%%%%%%%%%%%%%%%%%%%%%%%%%%%%%%%%%%%%%%%%%%%%%%%%%%%%%%%%%%%%%%%%%%%%%%%%%%%%%%%%%%%%%%%%%%%%%%%%%%%%%%%

\end{document}